\documentclass[a4paper, 10pt]{article}
\usepackage{graphicx}
\usepackage{amsmath}
\usepackage{amsthm}
\usepackage{amssymb}
\usepackage{latexsym}
\usepackage{dsfont}

\newtheorem{theorem}{Theorem}[section]
\newtheorem{definition}[theorem]{Definition}
\newtheorem{lemma}[theorem]{Lemma}

\newtheorem{corollary}[theorem]{Corollary}

\def\wt{\mathrm{wt}}

\title{On the nonexistence of $\left[\binom{2m}{m-1}, 2m, \binom{2m-1}{m-1}\right]$, $m$ odd, complex orthogonal design}
\author{Yuan Li and Haibin Kan}
\date{August 15, 2011}

\begin{document}

\maketitle

\begin{abstract}
Complex orthogonal designs (CODs) are used to construct space-time
block codes. COD $\mathcal{O}_z$ with parameter $[p, n, k]$ is a $p
\times n$ matrix, where nonzero entries are filled by $\pm z_i$ or
$\pm z^*_i$, $i = 1, 2, \ldots, k$, such that $\mathcal{O}^H_z
\mathcal{O}_z = (|z_1|^2+|z_2|^2+\ldots+|z_k|^2)I_{n \times n}$.
Adams et al. in ``The final case of the decoding delay problem for
maximum rate complex orthogonal designs,'' IEEE Trans. Inf. Theory,
vol. 56, no. 1, pp. 103-122, Jan. 2010, first proved the
nonexistence of $\left[\binom{2m}{m-1}, 2m,
\binom{2m-1}{m-1}\right]$, $m$ odd, COD. Combining with the previous
result that decoding delay should be an integer multiple of
$\binom{2m}{m-1}$, they solved the final case $n \equiv 2 \pmod 4$
of the decoding delay problem for maximum rate complex orthogonal
designs.

In this paper, we give another proof of the nonexistence of COD with
parameter $\left[\binom{2m}{m-1}, 2m, \binom{2m-1}{m-1}\right]$, $m$
odd. Our new proof is based on the uniqueness of
$\left[\binom{2m}{m-1}, 2m-1, \binom{2m-1}{m-1}\right]$ under
equivalence operation, where an explicit-form representation is
proposed to help the proof. Then, by proving it's impossible to add
an extra orthogonal column on COD $\left[\binom{2m}{m-1}, 2m-1,
\binom{2m-1}{m-1}\right]$ when $m$ is odd, we complete the proof of
the nonexistence of COD $\left[\binom{2m}{m-1}, 2m,
\binom{2m-1}{m-1}\right]$.

\textbf{Key words:} complex orthogonal design, space-time block
codes, maximal rate and minimal delay.
\end{abstract}

\section{Introduction}
Space-time block codes have been widely investigated for wireless
communication systems with multiple transmit and receive antennas.
Since the pioneering work by Alamouti \cite{Ala98} in 1998, and the
work by Tarokh et al. \cite{TJC99}, \cite{TJC00}, orthogonal designs
have become an effective technique for the design of space-time
block codes (STBC). The importance of this class of codes comes from
the fact that they achieve full diversity and have the fast
maximum-likelihood (ML) decoding.

A complex orthogonal design (COD) $\mathcal{O}_z[p, n, k]$ is an $p
\times n$ matrix, and each entry is filled by $\pm z_i$ or $\pm
z^*_i$, $i=1,2,\ldots, k$, such that $\mathcal{O}^H_z \mathcal{O}_z
= \sum_{i=1}^{n}|z_i|^2 I_n$, where $H$ is the Hermitian transpose
and $I_n$ is the $n \times n$ identity matrix. Under this
definition, the designs are said to be combinatorial, in the sense
that there is no linear processing in each entry. Code rate $k/p$
and decoding delay $p$ are the two most important criteria of
complex orthogonal space-time block codes. One important problem is,
given $n$, determine the tight upper bound of code rate, which is
called maximal rate problem. Another is, given $n$, determine the
tight lower bound of decoding delay $p$ when code rate $k/p$ reaches
the maximal, which is called minimal delay problem.

For combinatorial CODs, where linear combination is not allowed,
Liang determined for a COD with $n=2m$ or $2m-1$, the maximal
possible rate is $\frac{m+1}{2m}$ \cite{Lia03}. Liang gave an
algorithm in \cite{Lia03} to generate such CODs with rate
$\frac{m+1}{2m}$, which shows that this bound is tight. The minimal
delay problem are solved by Adams et al. In \cite{AKP07}, lower
bound ${2m \choose m-1}$ of decoding delay is proved for any $n=2m$
or $2m-1$. And further, it's proved that the decoding delay must be
a multiple of ${2m \choose m-1}$. In \cite{AKM10}, by showing the
nonexistence of COD $\left[\binom{2m}{m-1}, 2m,
\binom{2m-1}{m-1}\right]$ with even $m$, Adams et al. prove that
when $n \equiv 2 \pmod 4$, decoding delay $p$ is lowered bound by
$2{2m \choose m-1}$.

The tightness of above upper bound of rate is shown by constructions
in \cite{Lia03}, \cite{SXL04}, \cite{LFX05}. When $n \not \equiv 0
\pmod 4$, those in \cite{Lia03}, \cite{SXL04} also achieve the lower
bound of delay. And constructions in \cite{LFX05} achieves minimal
delay for all $n$.

 The organization of our paper is as follows. In
section 2, we introduce the notions, definitions and some known
results which will be used. In section 3, we present our main
results including the uniqueness of COD with parameter
$\left[\binom{2m}{m-1}, 2m-1, \binom{2m-1}{m-1}\right]$, and the
nonexistence of COD having parameter $\left[\binom{2m}{m-1}, 2m,
\binom{2m-1}{m-1}\right]$ which depends on the former result. In
order to prove the main results, an explicit-form construction of
optimal COD is introduced, which is crucial to our proofs.

\section{Preliminaries}
In this section, we introduce some basic notions, which will be used
in the sequel.

$\mathds{C}$ denotes the field of complex numbers, $\mathds{R}$ the
field of real numbers and $\mathbb{F}_2$ the field with two
elements. Adding over $\mathbb{F}_2$ is denoted by $\oplus$ to avoid
ambiguity. All vectors are assumed to be column vectors. For any
field $\mathds{F}$, denoted by $\mathds{F}^n$ and $M_{m \times
n}(\mathds{F})$ the set of all $n$-dimensional vectors in
$\mathds{F}$ and the set of all $m \times n$ matrices in
$\mathds{F}$, respectively. In this paper, rows and variables are
often indexed by vectors in $\mathbb{F}^2_n$.

For convenience, let $e_i \in \mathbb{F}^n_2$ be the vector with
$i^{\text{th}}$ bit occupied by $1$ and the others $0$, i.e., $ e_i
= (\underbrace{0, \ldots, 0}_{i-1}, 1, \underbrace{0, \ldots,
0}_{n-i}) $ and let $e = e_1 \oplus e_2 \oplus \ldots \oplus e_n$,
i.e., $ e = (1, 1, \ldots, 1)_2. $ The weight of a vector in
$\mathbb{F}^n_2$ is defined as the number of ones in $n$ bits, i.e.,
$\wt(\alpha) = \sum_{i=1}^{n} \alpha(i)$. Furthermore, $\wt_{s,
t}(\alpha)$ is defined as the sum of $s^{\text{th}}$ bit to
$t^{\text{th}}$ bit, i.e.,
\begin{displaymath}
\wt_{s, t}(\alpha) = \alpha(s) + \alpha(s+1) + \cdots + \alpha(t) =
\sum_{i = s}^{t}{\alpha(i)}.
\end{displaymath}

In abuse of notation, we denote by $z[j]$ the complex variable
$z_j$, up to negation and conjugation, i.e., $z[j] \in \{ z_j, -z_j,
z^*_j, -z^*_j\}.$ Note that the same notation $z[j]$ may represent
different elements in the same paragraph.

\begin{definition}
A $[p, n, k]$ complex orthogonal design $\mathcal{O}_z$ is a $p
\times n$ rectangular matrix whose nonzero entries are
$
z_1, z_2, \ldots, z_{k}, -z_1, -z_2, \ldots, -z_{k}
$
or their conjugates
$
z_1^*, z_2^*, \ldots, z_{k}^*, -z_1^*, -z_2^*, \ldots, -z_{k}^*,
$
where $z_1, z_2, \ldots, z_{k}$ are indeterminates over
 $\mathds{C}$, such that
\begin{displaymath}
\mathcal{O}_z^H\mathcal{O}_z = (|z_1|^2+|z_2|^2+\cdots+|z_{k}|^2)
I_{n \times n}.
\end{displaymath}
$k/p$ is called the code rate of $\mathcal{O}_z$, and $p$ is called
the decoding delay of $\mathcal{O}_z$.
\end{definition}

A matrix is called an Alamouti $2 \times 2$ if it matches the
following form
\begin{equation}
\label{equ:Alamouti}
\begin{pmatrix}
z_i & z_j \\
-z_j^* & z_i^*
\end{pmatrix},
\end{equation}
up to negation or conjugation of $z_i$ or $z_j$. We say two rows
share an Alamout $2 \times 2$ if and only if the intersection of the
two rows and some two columns form an Alamouti $2 \times 2$.

\begin{definition}
The equivalence operations performed on any COD are defined as
follows.
\begin{itemize}
\item[] 1) Rearrange the order the rows(``row permutation'').
\item[] 2) Rearrange the order the columns (``column permutation'').
\item[] 3) Conjugate all instances of certain variable (``instance
conjugation'').
\item[] 4) Negate all instances of certain variable (``instance
negation'').
\item[] 5) Change the index of all instances of certain variable (``instance
renaming'').
\item[] 6) Multiply any row by $-1$, (``row negation'').
\item[] 7) Multiply any column by $-1$, (``column negation'').
\end{itemize}
\end{definition}

It's not difficult to verify that, given a COD
$\mathcal{O}_z[p,n,k]$, after arbitrary equivalence operations, we
will obtain another COD $\mathcal{O}'_z[p,n,k]$. And we say COD
$\mathcal{O}_z$ and $\mathcal{O}'_z$ are the same under equivalence
operations.

Following the definition in \cite{Lia03}, with a little
modification, define an $(n_1, n_2)$-$\mathcal{B}_j$ form by
\begin{eqnarray}
\mathcal{B}_j & = & \begin{pmatrix} z_j I_{n_1} & \mathcal{M}_1 \\
-\mathcal{M}^H_1 & z^*_j I_{n_2}
\end{pmatrix} \nonumber\\
& = &
       \left(
       \begin{array}{c|c}
       \begin{array}{cccc}
       z_j&0&\cdots&0\\
       0&z_j&\cdots&0\\
       \vdots&\vdots&\ddots&\vdots\\
       0&0&\cdots&z_j
       \end{array}&\mathcal{M}_j\\
       \hline
       -\mathcal{M}_j^H&\begin{array}{cccc}
       z_j^*&0&\cdots&0\\
       0&z_j^*&\cdots&0\\
       \vdots&\vdots&\ddots&\vdots\\
       0&0&\cdots&z_j^*
       \end{array}
       \end{array}
       \right), \label{equ:Bj_form}
\end{eqnarray}
where $n_1 + n_2 = n$. And we call it $\mathcal{B}_j$ form for
short.

\begin{definition} \cite{AKP07}
We say COD $\mathcal{O}_z$ is in $\mathcal{B}_j$ form if the
submatrix $\mathcal{B}_j$ can be created from $\mathcal{O}_z$
through equivalence operations except for column permutation.
Equivalently, $\mathcal{O}_z$ is in $\mathcal{B}_j$ form if every
row of $\mathcal{B}_j$ appears within the rows of $\mathcal{O}_z$,
up to possible conjugations of all instances of $z_i$ and possible
factors of $-1$.
\end{definition}

 It is proved that
\cite{AKP07} that COD $\mathcal{O}_z$ is in some $\mathcal{B}_j$
form if and only if one row in $\mathcal{O}_z$ matches one row of
$\mathcal{B}_j$ up to signs and conjugations.

In \cite{Lia03}, Liang proved the upper bound $\frac{m+1}{2m}$ of
code rate $\frac{k}{p}$ for any $n=2m$ or $2m-1$, and obtained the
necessary and sufficient condition to reach the maximal rate.

\begin{theorem}
\label{thm:max_rate} Let $n=2m$ or $2m-1$. The rate of COD
$\mathcal{O}_z[p,n,k]$ is upper bounded by $\frac{m+1}{2m}$, i.e.,
$\frac{k}{p} \le \frac{m+1}{2m}$.

This bound is achieved if and only if for all $i=1,2,\ldots, k$,
$\mathcal{B}_j$ is an $(m,m-1)$-$\mathcal{B}_j$ or
$(m-1,m)$-$\mathcal{B}_j$ form and there are no zero entries in
$\mathcal{M}_j$, when $n=2m-1$; $\mathcal{B}_j$ is an
$(m,m)$-$\mathcal{B}_j$ form and there are no zero entries in
$\mathcal{M}_j$, when $n=2m$.
\end{theorem}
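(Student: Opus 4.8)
\section*{Proof proposal}

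The plan is to first reduce to the case $n=2m-1$, then prove the inequality by a double count of zero entries organized by the $\mathcal{B}_j$ forms, and finally read off the equality conditions by tracing the slack. For the reduction, observe that deleting one column of a COD $\mathcal{O}_z[p,2m,k]$ — and then deleting any rows that have become identically zero — produces a COD with $2m-1$ columns, the same $k$, and at most $p$ rows; since passing to fewer rows only increases $k/p$, a bound of $\tfrac{m+1}{2m}$ for all $(2m-1)$-column CODs immediately gives it for all $(2m)$-column CODs. So from now on fix $n=2m-1$. Before the count I would record three elementary consequences of $\mathcal{O}_z^H\mathcal{O}_z=(|z_1|^2+\cdots+|z_k|^2)I_n$, all obtained by reading off individual entries and matching monomials: each variable occurs exactly once in every column; no variable occurs twice in a single row, hence each variable occurs in exactly $n$ distinct rows; and for any two columns $c,c'$ the rows that are nonzero in both split into pairs, each pair together with $c,c'$ forming an Alamouti $2\times2$ (the cross monomials in the $(c,c')$ inner product must cancel, and a monomial $z_a^{(*)}z_b^{(*)}$ can only cancel against a swapped-and-conjugated copy coming from another row).

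By the result of \cite{AKP07} quoted above, for every $j$ the design is in an $(n_1^{(j)},n_2^{(j)})$-$\mathcal{B}_j$ form with $n_1^{(j)}+n_2^{(j)}=n$; the two diagonal identity blocks $z_jI_{n_1^{(j)}}$ and $z_j^*I_{n_2^{(j)}}$ of $\mathcal{B}_j$ account for at least $n_1^{(j)}(n_1^{(j)}-1)+n_2^{(j)}(n_2^{(j)}-1)\ge 2(m-1)^2$ zero entries, all located in the $n$ rows containing $z_j$ and (since a $\mathcal{B}_j$ form uses no column permutation) at genuine positions of $\mathcal{O}_z$. Writing $w_r$ for the number of nonzero entries of row $r$ and double counting the triples $(j,r,c)$ with $r$ a row containing $z_j$ and $\mathcal{O}_{r,c}=0$ — from the row side each zero entry in row $r$ is charged to the $w_r$ distinct variables appearing in that row — gives
\[
\sum_r w_r(n-w_r)=\sum_{j=1}^k(\text{number of zero entries of }\mathcal{B}_j)\ \ge\ 2k(m-1)^2 .
\]

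Combined only with $\sum_r w_r=nk$ and $\sum_r w_r^2\ge (nk)^2/p$, this inequality yields $k/p\le \tfrac{2m^2-1}{(2m-1)^2}$, which is strictly larger than $\tfrac{m+1}{2m}$ for $m\ge 2$: the double count overcharges, since a zero lying in a dense row is attributed to many variables at once. To recover the sharp constant I would argue by induction on $m$ (with $m=1$ trivial, as the rate never exceeds $1$). Given $\mathcal{O}_z[p,2m-1,k]$, delete two columns and the rows whose nonzero entries all lie in those two columns, obtaining a COD with $2m-3$ columns, the same $k$, and $p-t$ rows; the inductive hypothesis $2(m-1)k\le m(p-t)$ then implies $2mk\le(m+1)p$ provided $mt\ge 2k-p$. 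Thus it suffices to choose two columns carrying enough ``private'' low-weight rows, which is automatic once there are enough rows of weight $1$ or $2$ but need not hold in general.

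I expect the heart of the proof to be exactly the ``dense'' case, in which every row has weight at least $3$ (so that no two columns have private rows) while the rate is still large: there the reduction above is unavailable, and one must instead extract the smaller COD from the internal block structure of a $\mathcal{B}_j$ form — deleting a carefully chosen group of columns together with the matching rows in such a way that the discarded portion of $\mathcal{M}_j$ is self-orthogonal (its Gram matrix has vanishing off-diagonal), which is where the Alamouti-pairing lemma and a sharper accounting of the zeros forced by the remaining $\mathcal{B}_{j'}$ must be combined. Once the inequality is established, the equality characterization follows by tracking where slack was thrown away: $k/p=\tfrac{m+1}{2m}$ forces, for every $j$, that $\{n_1^{(j)},n_2^{(j)}\}$ is as balanced as $n$ permits ($\{m-1,m\}$ when $n=2m-1$, and $\{m,m\}$ when $n=2m$ — the latter because in the equality case the column-deletion reduction must delete no rows), and that $\mathcal{M}_j$ has no zero entry; conversely, the explicit maximal-rate constructions of \cite{Lia03} realize precisely these conditions, so they are also sufficient.
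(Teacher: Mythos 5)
First, note that the paper does not actually prove this theorem: it is Liang's maximal-rate theorem, quoted from \cite{Lia03} as a known prerequisite, so there is no in-paper proof to compare against. Judged on its own terms, your attempt has a genuine gap at exactly the point where the real difficulty of Liang's theorem lives. Your reduction to $n=2m-1$, the three structural facts, and the double count are all sound as far as they go, but as you yourself compute, the count only yields $k/p\le\tfrac{2m^2-1}{(2m-1)^2}$, which is strictly weaker than $\tfrac{m+1}{2m}$. Everything after that is a plan rather than a proof: the induction on $m$ needs two deleted columns carrying $t$ private rows with $mt\ge 2k-p$, you concede this ``need not hold in general,'' and the dense case that you correctly identify as the heart of the matter is left entirely open --- the sentence about extracting a smaller COD from the internal block structure of a $\mathcal{B}_j$ form so that the discarded part of $\mathcal{M}_j$ is self-orthogonal describes a hoped-for argument, not an argument. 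Liang's actual proof of the sharp constant is a long combinatorial analysis of how the various $\mathcal{B}_j$ forms constrain one another; none of that is supplied here, so the inequality is not established.

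The equality characterization also cannot stand as written, for two reasons. First, it is obtained by ``tracking where slack was thrown away'' in an inequality you have not proved. Second, the ``if'' direction of the equivalence --- that balanced $\mathcal{B}_j$ forms with no zeros in any $\mathcal{M}_j$ force $k/p=\tfrac{m+1}{2m}$ --- is not proved by pointing at Liang's constructions: exhibiting some CODs that satisfy the conditions and attain the rate does not show that every COD satisfying the conditions attains it. That direction does in fact follow from a short count (under the stated conditions every row has weight $m$ or $m+1$, each variable lies in exactly $m$ rows of weight $m$, hence there are exactly $k$ rows of weight $m$ and $p=\tfrac{2mk}{m+1}$), but you would need to write that out rather than cite constructions.
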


The lower bound on the decoding delay when code rate reaches the
maximal is completely solved by Adams et al. in \cite{AKP07} and
\cite{AKM10}.

\begin{theorem} Let $n=2m$ or $2m-1$. For COD
$\mathcal{O}_z[p,n,k]$, if the rate reaches the maximal, i.e.,
$\frac{k}{p} = \frac{m+1}{2m}$, the delay, i.e., $p$, is lower
bounded by ${2m \choose m-1}$ when $n \equiv 0,1,3 \pmod 4$; by
$2{2m \choose m-1}$ when $n \equiv 2 \pmod 4$.
\end{theorem}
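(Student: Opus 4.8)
The plan is to reduce the stated bounds to two ingredients that are already available and one that is the subject of the rest of the paper. First, by \cite{AKP07}, every maximal-rate COD with $n=2m$ or $2m-1$ has delay $p \ge \binom{2m}{m-1}$, and moreover $p$ is an integer multiple of $\binom{2m}{m-1}$. This already gives the bound $\binom{2m}{m-1}$ in all residues, so the only thing left is to improve it to $2\binom{2m}{m-1}$ when $n\equiv 2\pmod 4$. Observe that $n\equiv 2\pmod 4$ forces $n=2m$ with $m$ odd, since $2m-1$ is never $\equiv 2\pmod 4$. If the improved bound were to fail, then by the divisibility statement $p=\binom{2m}{m-1}$, and combining this with $k/p=(m+1)/(2m)$ gives $k=\frac{m+1}{2m}\binom{2m}{m-1}=\binom{2m-1}{m-1}$; hence the COD would have parameters $\bigl[\binom{2m}{m-1},\,2m,\,\binom{2m-1}{m-1}\bigr]$. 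For $m$ even such a COD does not exist by \cite{AKM10}, and for $m$ odd its nonexistence is precisely the main theorem of the present paper. So $p\ge 2\binom{2m}{m-1}$ when $n\equiv 2\pmod 4$, and the entire difficulty is concentrated in the nonexistence of $\bigl[\binom{2m}{m-1},\,2m,\,\binom{2m-1}{m-1}\bigr]$ for $m$ odd.

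For that nonexistence I would argue by contradiction. Suppose $\mathcal{O}$ is such a COD. Deleting any one column yields a $p\times(2m-1)$ array whose columns remain pairwise orthogonal with the correct norm; using the $\mathcal{B}_j$-form characterization of Theorem \ref{thm:max_rate} (for $n=2m$ every variable sits, with no zero entries, inside an $(m,m)$-$\mathcal{B}_j$ block) one checks that for $m\ge 3$ no variable is lost, so the result is again a COD, now with parameters $\bigl[\binom{2m}{m-1},\,2m-1,\,\binom{2m-1}{m-1}\bigr]$, i.e.\ a maximal-rate COD of minimal delay for $n=2m-1$. Therefore it suffices to (i) classify all maximal-rate minimal-delay CODs for $n=2m-1$ up to the equivalence operations, showing there is essentially just one, and (ii) prove that this one cannot be augmented by a single further orthogonal column when $m$ is odd. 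Step (ii) then directly contradicts the existence of $\mathcal{O}$.

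For step (i) I would introduce an explicit model COD $\mathcal{O}^{\sharp}$: index its $\binom{2m}{m-1}$ rows by a fixed family of binary vectors (weight-$(m-1)$ vectors in $\mathbb{F}_2^{2m}$, using the identification $\alpha\leftrightarrow\alpha\oplus e$ where convenient), index its $\binom{2m-1}{m-1}$ variables by weight-$(m-1)$ vectors in $\mathbb{F}_2^{2m-1}$, and prescribe the entry in row $\alpha$, column $j$ to be a definite element $z[\cdot]$ (a variable up to sign and conjugation) read off from simple parity data of $\alpha$ and $j$, together with a sign function chosen so that the orthogonality relations close up. After checking $(\mathcal{O}^{\sharp})^{H}\mathcal{O}^{\sharp}=(\sum|z_i|^2)\,I$ directly, I would establish uniqueness by induction on $m$: in any maximal-rate minimal-delay COD for $n=2m-1$, use the $\mathcal{B}_j$-form structure to produce an Alamouti-compatible pair of rows, normalize by equivalence operations until a $\mathcal{B}_j$ block coincides with that of $\mathcal{O}^{\sharp}$, peel it off, and apply the induction hypothesis to the smaller instance.

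The main obstacle is step (ii): showing that $\mathcal{O}^{\sharp}$ admits no extra orthogonal column exactly when $m$ is odd. Adjoining a column means choosing, for each of the $\binom{2m}{m-1}$ rows, an entry in $\{0\}\cup\{\pm z_i,\pm z_i^{*}\}$ so that the new column has norm $\sum|z_i|^2$ and is orthogonal to all $2m-1$ existing columns. Orthogonality against the $j$-th column of $\mathcal{O}^{\sharp}$, with the entries in explicit form, pins down — up to a propagating sign ambiguity — which variable and which conjugation pattern must occupy each row of the new column; chasing these constraints around the combinatorial index set forces a consistency (parity) condition that is satisfiable precisely when $m$ is even, so $m$ odd yields the contradiction. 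I expect most of the work, and the reason the explicit representation is indispensable, to lie in organizing this sign and parity bookkeeping and isolating the single place where the parity of $m$ enters.
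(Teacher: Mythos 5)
Your reduction of the stated theorem is exactly the one the paper (following \cite{AKM10}) uses: the bound $\binom{2m}{m-1}$ and the divisibility of $p$ by $\binom{2m}{m-1}$ come from \cite{AKP07}, $n\equiv 2\pmod 4$ forces $n=2m$ with $m$ odd, and the entire content is the nonexistence of a $\left[\binom{2m}{m-1},2m,\binom{2m-1}{m-1}\right]$ COD for $m$ odd, which you attack, as the paper does, by deleting a column, classifying the resulting $(2m-1)$-column COD up to equivalence, and showing that no orthogonal column can be re-adjoined when $m$ is odd. Your step (ii) is in substance the paper's Lemma \ref{lem:pad_imposs}: Lemma \ref{lem_zp_mode} forces the new column's entry in row $\alpha$ to be $\pm z_{\alpha\oplus e_{2m}}$ (zero iff $\alpha(2m)=0$), and chasing the Alamouti sign constraints over all $i$ yields $\phi(\alpha)=\phi(\alpha)\oplus m$, a contradiction precisely for odd $m$.

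The one genuine problem is your proposed proof of uniqueness in step (i). An induction that ``peels off a $\mathcal{B}_j$ block and applies the induction hypothesis to the smaller instance'' is not well-founded as stated: removing the $2m-1$ rows of a normalized $\mathcal{B}_j$ block from a $\left[\binom{2m}{m-1},2m-1,\binom{2m-1}{m-1}\right]$ COD leaves $\binom{2m}{m-1}-(2m-1)$ rows on the same $2m-1$ columns, which is not a maximal-rate minimal-delay COD for $n=2m-3$ (already for $m=3$ the smaller instance would need $\binom{4}{1}=4$ rows, but $15-5=10$ remain), so there is no smaller instance of the same problem to recurse on. The paper avoids induction entirely: Lemmas \ref{lem_zp_mode} and \ref{lem_zp_exist} force every admissible zero pattern to occur and pin down which variable (up to sign and conjugation) occupies each nonzero position, Lemma \ref{lem_conj_separated} normalizes the conjugation pattern, and the proof of Theorem \ref{thm:2m-1_eqval} then fixes all signs in a single greedy pass over the variables in increasing index order, using only instance negation, row negation, and the transitivity of sign relations inside Alamouti $2\times 2$ blocks. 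You would need to replace your induction with an argument of this kind, or otherwise justify the recursion, before step (i) is sound; the rest of your outline matches the paper.
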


The technique in proving the lower bound ${2m \choose m-1}$ is the
observation and definition of zero pattern, which is a vector in
$\mathbb{F}^n_2$ defined with respect to one row where the
$i^{\text{th}}$ bit is $0$ if and only if the element on column $i$
is $0$. For example, when
\begin{equation}
\mathcal{O}_z=
\begin{pmatrix}
z_1 & z_2 & z_3 \\
-z^*_2 & z^*_1 & 0 \\
-z^*_3 & 0 & z^*_1 \\
0 & z^*_3 & -z^*_2
\end{pmatrix},
\label{equ:COD433}
\end{equation}
the first row has zero pattern $(1,1,1)$, the second $(1,1,0)$, the
third $(1,0,1)$, the fourth $(0,1,1)$.

In \cite{AKP07}, it's proved that the decoding delay is an integer
multiple of ${2m \choose m-1}$. Therefore, in order to prove the
lower bound of delay $2{2m \choose m-1}$ for $n \equiv 2 \pmod 4$,
it's sufficient to prove the nonexistence of $\left[{2m \choose
m-1}, 2m, {2m-1 \choose 2m-1}\right]$. The basic idea in
\cite{AKM10} is, first proving the uniqueness of COD with parameter
$\left[{2m \choose m-1}, 2m-1, {2m-1 \choose 2m-1}\right]$ under
equivalence operation, then showing is impossible to add an extra
column for a specific COD with parameter $\left[{2m \choose m-1},
2m-1, {2m-1 \choose 2m-1}\right]$ to obtain a new one. Our proof
follows the same basic idea, but different from theirs, we define an
explicit-form COD $\mathcal{G}_{2m-1}\left[{2m \choose m-1}, 2m-1,
{2m-1 \choose 2m-1}\right]$, while another standard from is defined
to help prove the uniqueness in \cite{AKM10}. Due to our
explicit-form construction, it's much easier to show the
impossibility of adding an extra orthogonal column to
$\mathcal{G}_{2m-1}$.

\section{Main Results}

The following lemma is first proved in \cite{AKM10}, which an
observation of $\mathcal{B}_j$ form.

\begin{lemma}
\label{lem_zp_mode} For maximal rate COD $\mathcal{O}_z[p,n,k]$, if
$|\mathcal{O}_z(\alpha, i)|=|\mathcal{O}_z(\beta,j)|$ then the zero
patterns of row $\alpha$ and row $\beta$ are only different at
column $i$ and $j$; if $|\mathcal{O}_z(\alpha,
i)|=|\mathcal{O}_z(\beta,j)|^*$ then the zero pattern of row
$\alpha$ and row $\beta$ are only the same at column $i$ and $j$.
\end{lemma}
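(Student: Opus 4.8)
The plan is to work directly from the structure imposed by maximal rate, namely Theorem~\ref{thm:max_rate}: every variable $z_j$ appears in a $\mathcal{B}_j$ form with no zero entries in the off-diagonal block $\mathcal{M}_j$. I would first fix the variable $z_j$ for which $|\mathcal{O}_z(\alpha,i)|$ equals (a power/conjugate of) $z_j$, and recall the key consequence of the $\mathcal{B}_j$ form: the $2m$ (or $2m-1$) rows containing an instance of $z_j$ split into two groups, those in which $z_j$ sits on a diagonal position of the upper block (call them ``$z_j$-rows'') and those in which $z_j^*$ sits on a diagonal position of the lower block (``$z_j^*$-rows''); within a fixed group all rows have the \emph{same} zero pattern, because each such row is, up to signs and conjugation, a row of $\mathcal{B}_j$, and the rows of a single block of $\mathcal{B}_j$ share the zero pattern of $\mathcal{M}_j$ (which is full) together with exactly one diagonal entry. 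The second, complementary fact I would extract is that a $z_j$-row and a $z_j^*$-row have zero patterns that agree only in the two columns where $z_j$ and $z_j^*$ respectively sit, because the upper block's column support and the lower block's column support overlap precisely in those diagonal slots (the $\mathcal{M}_j$ parts occupy disjoint column ranges in the two halves of $\mathcal{B}_j$). Formalizing these two facts is really the heart of the argument; everything after is bookkeeping.

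Next I would treat the two cases of the lemma. Case (i): $|\mathcal{O}_z(\alpha,i)| = |\mathcal{O}_z(\beta,j)|$, i.e., both equal $z_j$ up to sign/conjugation — here I must be slightly careful, since ``$|\cdot|$'' identifies $z_j$ with $z_j^*$. If both instances are un-starred (or both starred), rows $\alpha$ and $\beta$ lie in the same group, so from the first fact their zero patterns differ only in the single diagonal column each occupies, namely $i$ and $j$; if one is starred and one is not, I apply the $\mathcal{B}_j$-form description to see that $\alpha$ is (say) a $z_j$-row and $\beta$ a $z_j^*$-row, and I need to check that $i$ is the diagonal slot of $z_j$ in $\alpha$ and $j$ that of $z_j^*$ in $\beta$; then the second fact gives exactly that the patterns agree \emph{everywhere except} at columns $i,j$. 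Wait — I should double-check the statement's phrasing: the lemma says ``different at column $i$ and $j$,'' which I read as: the symmetric difference of the two zero patterns is contained in $\{i,j\}$. So in the same-group subcase the off-diagonal supports coincide and the only possible disagreement is at the two diagonal slots, giving the claim. Case (ii): $|\mathcal{O}_z(\alpha,i)| = |\mathcal{O}_z(\beta,j)|^*$ forces $\alpha$ and $\beta$ into opposite groups (one carries $z_j$ on a diagonal of the top block, the other $z_j^*$ on a diagonal of the bottom block), and then by the second fact their zero patterns \emph{agree} only at columns $i$ and $j$ — that is, are equal on $\{i,j\}$ and differ everywhere else on the $2m-2$ remaining columns where the two $\mathcal{M}_j$-blocks sit.

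The main obstacle I anticipate is the careful identification of which column each ``$|\cdot|$''-matched instance occupies within the $\mathcal{B}_j$ form, and confirming that distinct diagonal rows of $\mathcal{B}_j$ really do differ only in their single diagonal entry and nowhere in the $\mathcal{M}_j$-columns — this uses crucially the ``no zero entries in $\mathcal{M}_j$'' half of Theorem~\ref{thm:max_rate}, without which a row of $\mathcal{M}_1$ or $-\mathcal{M}_1^H$ could have extra zeros and the patterns could differ outside $\{i,j\}$. I would also need to rule out that rows $\alpha,\beta$ both come from the \emph{same} row of $\mathcal{B}_j$ (trivial, but worth a line) and handle the degenerate $n=2m-1$ case where one of $n_1,n_2$ is $m-1$; the column-count arithmetic is identical, so no new idea is needed there. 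Modulo those verifications, the lemma follows immediately from the block shape dictated by $\mathcal{B}_j$ form.
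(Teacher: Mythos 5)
Your approach is the intended one: the paper does not prove Lemma~\ref{lem_zp_mode} itself (it cites \cite{AKM10} and describes the lemma as ``an observation of $\mathcal{B}_j$ form''), and your argument via the block structure guaranteed by Theorem~\ref{thm:max_rate} --- rows in the same block of $\mathcal{B}_j$ share the full support of $\mathcal{M}_j$ and can differ only in their diagonal slots, while rows in opposite blocks have complementary $\mathcal{M}_j$-supports and overlap exactly in the two diagonal slots --- is precisely that observation, correctly carried out, including the dependence on the ``no zero entries in $\mathcal{M}_j$'' clause.

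One slip to fix: in your Case~(i) you admit the subcase where one instance is starred and the other is not, and assert that ``the second fact gives exactly that the patterns agree everywhere except at columns $i,j$.'' Your own second fact says the opposite (the patterns agree \emph{only} at $i$ and $j$), so that subcase as written would contradict the conclusion of Case~(i). The resolution is that the notation $|\mathcal{O}_z(\alpha,i)|=|\mathcal{O}_z(\beta,j)|$ must be read as equality of the unsigned entries \emph{including} conjugation status (otherwise the two hypotheses of the lemma would coincide while their conclusions conflict); under that reading the mixed subcase belongs to Case~(ii), where your second fact gives exactly the stated conclusion, and Case~(i) reduces to the same-block situation you already handle correctly.
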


The following lemma is also first proved in \cite{AKP07}. For
completeness, we give another proof.

\begin{lemma}
\label{lem_zp_exist} For maximal rate COD $\mathcal{O}_z[p,n,k]$,
$n=2m$ or $2m-1$, then $p \ge \binom{2m}{m-1}$. When $n=2m$, every
zero pattern with $m+1$ ones exist; when $n=2m-1$, every zero
pattern with $m$ or $m+1$ ones exist.
\end{lemma}
\begin{proof}
First, we will prove if one zero pattern of some row is $\alpha \in
\mathbb{F}_2^n$, then for any $\alpha(i)=1, \alpha(j)=0$, there
exists one row with zero pattern $\beta \in \mathbb{F}_2^n$, such
that $\beta(i) = \alpha(j), \beta(j) = \alpha(i)$ and $\beta(l) =
\alpha(l)$ for all $l \not= i, j$. To see the existence of zero
pattern $\beta$, we only need to arrange $\mathcal{O}_z$ into
$\mathcal{B}_{\gamma}$ form, where $\mathcal{O}_z(\alpha, i)=
z[\gamma]$.

Then, since any permutation is a product of transpositions, all zero
patterns with weight $m$ (or $m+1$) exists.
\end{proof}

As a consequence of Lemma \ref{lem_zp_mode} and Lemma
\ref{lem_zp_exist}, we know, up to negations, CODs with parameter
$\left[\binom{2m}{m-1}, 2m-1, \binom{2m-1}{m-1}\right]$ are the same
under equivalence operation. The following lemma is to help define
the ``standard form'' COD $\mathcal{G}_{2m-1}\left[\binom{2m}{m-1},
2m-1, \binom{2m-1}{m-1}\right]$.

\begin{lemma} \cite{AKM11} \label{lem_conj_separated} Let $\mathcal{O}_z$ be a maximum rate, minimal delay COD with parameter $\left[\binom{2m}{m-1},
2m-1, \binom{2m-1}{m-1}\right]$. Then $\mathcal{O}_z$ is equivalent
to a COD that is conjugation separated, where the rows containing
$m$ nonzero entries are all conjugated, and those containing $m+1$
nonzero entries are all non-conjugated.
\end{lemma}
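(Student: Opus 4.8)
The plan is to reach conjugation separation using only \emph{instance conjugations} (equivalence operation~3), with the choice of which variables to conjugate dictated by the $\mathcal{B}_j$ forms guaranteed by Theorem~\ref{thm:max_rate}. Throughout, I use that each variable $z_j$ occurs exactly once in every column (since the $(c,c)$ entry of $\mathcal{O}_z^H\mathcal{O}_z$ is $\sum_r|\mathcal{O}_z(r,c)|^2=\sum_i|z_i|^2$ and the $z_i$ are indeterminates), so $z_j$ occurs exactly $2m-1$ times in $\mathcal{O}_z$.

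The first step is a weight count inside a single $\mathcal{B}_j$ form. Fix $z_j$. In the $(2m-1)\times(2m-1)$ block of Theorem~\ref{thm:max_rate}, each column of $\mathcal{B}_j$ already carries its unique copy of $z_j$ or $z_j^*$ on the diagonal, so $\mathcal{M}_j$ and $-\mathcal{M}_j^H$ contain no instance of $z_j$; hence $\mathcal{B}_j$ carries exactly one instance of $z_j$ per row. Consequently the rows of $\mathcal{B}_j$ are, up to column permutation and the usual equivalences, precisely the $2m-1$ rows of $\mathcal{O}_z$ containing $z_j$. Reading off zero patterns of an $(n_1,n_2)$-$\mathcal{B}_j$ form with $\mathcal{M}_j$ dense: a ``top'' row (which carries $z_j$) has $1+n_2$ nonzero entries and a ``bottom'' row (which carries $z_j^*$) has $n_1+1$. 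Since $(n_1,n_2)\in\{(m,m-1),(m-1,m)\}$ here, this yields a clean dichotomy for each variable: either $z_j$ occurs un-conjugated in exactly the weight-$m$ rows and conjugated in exactly the weight-$(m+1)$ rows (the $(m,m-1)$ case), or the reverse (the $(m-1,m)$ case). Moreover, every row of $\mathcal{O}_z$ has weight $m$ or $m+1$: by Lemma~\ref{lem_zp_exist} all zero patterns of weight $m$ or $m+1$ occur, and there are exactly $\binom{2m-1}{m}+\binom{2m-1}{m+1}=\binom{2m}{m+1}=\binom{2m}{m-1}=p$ of them, so these are all the rows.

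The second step is the normalization. Call $z_j$ of type~A if it falls in the first case above and type~B otherwise, and apply instance conjugation to every type-A variable while leaving the type-B ones alone; since we have only composed equivalence operations, the result $\mathcal{O}'_z$ is again a COD with the same parameters. After this step every variable—whether newly conjugated or not—occurs conjugated in each weight-$m$ row and un-conjugated in each weight-$(m+1)$ row: for a type-A variable its weight-$m$ instances were $\pm z_j$ and have become $\pm z_j^*$ (and dually for weight-$(m+1)$ rows), and for a type-B variable this already held. Therefore every entry of a weight-$m$ row of $\mathcal{O}'_z$ has the form $\pm z_j^*$ and every entry of a weight-$(m+1)$ row has the form $\pm z_j$; as the weight-$m$ rows are exactly the rows with $m$ nonzero entries and the weight-$(m+1)$ rows exactly those with $m+1$, $\mathcal{O}'_z$ is conjugation separated in the stated sense.

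The only step that is not pure bookkeeping is the dichotomy—that the conjugation status of a variable is governed \emph{globally} by row weight. This is exactly where the ``no zero entry in $\mathcal{M}_j$'' clause of Theorem~\ref{thm:max_rate} is used: it forces the dense block $\mathcal{B}_j$ to exhaust all $2m-1$ rows containing $z_j$, leaving no room for $z_j$ to be conjugated in one weight-$m$ row but not in another. Everything afterwards is a direct unwinding of the definition of conjugation separation, and in particular the full uniqueness statement preceding the lemma is not needed for this argument.
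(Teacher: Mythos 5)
Your argument is correct. Note first that the paper does not actually prove Lemma~\ref{lem_conj_separated}: it is imported wholesale from \cite{AKM11}, so there is no in-paper proof to compare against, and what you have supplied is a self-contained derivation from Theorem~\ref{thm:max_rate} and Lemma~\ref{lem_zp_exist}. The two pillars of your argument both check out: (i) the ``one instance per column'' fact forces the dense $(n_1,n_2)$-$\mathcal{B}_j$ block to account for all $2m-1$ instances of $z_j$ and to contain exactly one per row, and since $n_1+n_2=2m-1$ is odd the top rows (weight $1+n_2$) and bottom rows (weight $n_1+1$) necessarily have different weights, one equal to $m$ and one to $m+1$; (ii) the counting $\binom{2m-1}{m}+\binom{2m-1}{m+1}=\binom{2m}{m-1}=p$ together with Lemma~\ref{lem_zp_exist} pins every row's weight to $m$ or $m+1$. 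Because instance conjugation in the definition of ``being in $\mathcal{B}_j$ form'' acts on \emph{all} instances of a variable simultaneously, the conjugation status of $z_j$ in $\mathcal{O}_z$ is indeed constant on the top block and constant-and-opposite on the bottom block, which is exactly the per-variable dichotomy you need; your subsequent normalization by conjugating the ``type A'' variables is a legitimate composition of equivalence operations and lands precisely on the conjugation-separated form in the statement. (The same dichotomy could be read off even more directly from Lemma~\ref{lem_zp_mode}: opposite conjugation status forces row weights $m$ and $m+1$, so equal weight forces equal status.) One cosmetic caveat: your parenthetical identification of ``type A'' with the $(m,m-1)$ form and ``type B'' with the $(m-1,m)$ form is not forced, since the $\mathcal{B}_j$ form is only determined up to a global conjugation of $z_j$; but this does not affect the argument, as the dichotomy itself is all you use and both types are handled symmetrically.
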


 By Lemma \ref{lem_conj_separated}, we
know $\mathcal{O}_z$ can be made conjugation separated with rows
containing $m$ nonzero entries all conjugated. We identify each row
by its zero pattern and with the ${2m}^{\text{th}}$ bit denoting
whether the row is conjugated or not, i.e., $\alpha \in
\mathbb{F}_2^{2m}$ with $\wt(\alpha)=m+1$.

Let $\mathcal{G}_{2m-1}$ be a COD with parameter
$\left[\binom{2m}{m-1}, 2m-1, \binom{2m-1}{m-1}\right]$ with rows
identified by vectors in $\mathbb{F}^{2m}_2$ with weight $m+1$ and
columns identified by $1, 2, \ldots, 2m-1$. The elements of
$\mathcal{G}_{2m-1}$ are determined by the following rules.
\begin{itemize}
\item If $\alpha(i)=1$ and $\alpha(2m)=0$, $\mathcal{O}_z(\alpha,
i)=(-1)^{\theta(\alpha,i)} z_{\alpha\oplus e_i}$;
\item
If $\alpha(i)=1$ and $\alpha(2m)=1$, $\mathcal{O}_z(\alpha,
i)=(-1)^{\theta(\alpha,i)} z^*_{\alpha\oplus e_i\oplus e}$;
\item If $\alpha(i)=0$, $\mathcal{O}_z(\alpha, i)=0$.
\end{itemize}
Here
\begin{equation}
\label{equ:def_theta} \theta(\alpha, i) =
\begin{cases}
wt_{i, 2m}(\alpha) + \frac{i}{2}, & \text{if $i$ is even}, \\
wt_{i, 2m}(\alpha) + \frac{i-1}{2} + \alpha(2m), & \text{if $i$ is
odd}.
\end{cases}
\end{equation}

From Lemma \ref{lem_zp_mode}, we know rows $\alpha, \beta \in
\mathbb{F}^{2m}$ share an Alamouti $2 \times 2$ if and only if their
zero patterns are only the same at column $1 \le i, j \le 2m-1$,
such that $\alpha(i)=\alpha(j)=\beta(i)=\beta(j)=1$ and $\alpha
\oplus \beta = e \oplus e_i \oplus e_j$. Submatrix
$\mathcal{G}_{2m-1}(\alpha,\beta; i, j)$ has the following form
$$
\begin{pmatrix}
(-1)^{\theta(\alpha, i)}z_{\alpha \oplus e_i} & (-1)^{\theta(\alpha, j)}z_{\alpha \oplus e_j}\\
(-1)^{\theta(\beta, i)}z^*_{\beta \oplus e_i \oplus e} &
(-1)^{\theta(\beta, j)}z^*_{\beta \oplus e_j \oplus e}
\end{pmatrix}
$$
or
$$
\begin{pmatrix}
(-1)^{\theta(\alpha, i)}z^*_{\alpha \oplus e_i} & (-1)^{\theta(\alpha, j)}z^*_{\alpha \oplus e_j}\\
(-1)^{\theta(\beta, i)}z_{\beta \oplus e_i \oplus e} &
(-1)^{\theta(\beta, j)}z_{\beta \oplus e_j \oplus e}
\end{pmatrix}.
$$
Note that $\alpha \oplus \beta = e \oplus e_i \oplus e_j \Rightarrow
\alpha \oplus e_i = \beta \oplus e_j \oplus e$. Thus, we only need
to check the signs to see whether submatrix
$\mathcal{G}_{2m-1}(\alpha,\beta; i, j)$ is an Alamouti $2 \times
2$.

Let's calculate $\theta(\alpha, i) + \theta(\beta, i)$ according to
the parity of $i$, by definition \eqref{equ:def_theta}. When $i$ is
even, $\theta(\alpha, i) + \theta(\beta, i)=
\wt_{i,2m}(\alpha)+\frac{i}{2} + \wt_{i, 2m}(\beta) + \frac{i}{2}
\equiv \wt_{i,2m}(\alpha \oplus \beta) + i \pmod 2$; When $i$ is
odd, $\theta(\alpha, i) + \theta(\beta, i)=
\wt_{i,2m}(\alpha)+\frac{i-1}{2} + \alpha(2m) + \wt_{i, 2m}(\beta) +
\frac{i-1}{2} + \beta(2m) \equiv \wt_{i,2m}(\alpha \oplus \beta) + i
\pmod 2$. Therefore, we see
\begin{equation}
\label{equ:theta_ab_sum} \theta(\alpha, i) + \theta(\beta, i) \equiv
\wt_{i,2m}(\alpha \oplus \beta) + i \pmod 2
 \end{equation} always holds.

Then,
\begin{eqnarray}
& & \theta(\alpha, i) + \theta(\beta, i)+\theta(\alpha, j) +
\theta(\beta, j) \nonumber\\
& \equiv & \wt_{i,2m}(\alpha \oplus \beta) + i + \wt_{j,2m}(\alpha
\oplus \beta) + j \nonumber\\
& \equiv & \wt_{i, j}(\alpha \oplus \beta) + i + j \nonumber\\
& \equiv & j-i-1+i+j \nonumber\\
& \equiv & 1 \pmod 2. \nonumber
\end{eqnarray}
In the last second step, $\wt_{i, j}(\alpha \oplus \beta) = j-i-1$
is true because $\alpha \oplus \beta = e \oplus e_i \oplus e_j$.

Up to now, by constructing a specific function $\theta(\alpha, i)$,
we see $\mathcal{G}_{2m-1}$ is a COD. In fact, we only need to know
such arrangement of signs exists.

\begin{theorem}
\label{thm:2m-1_eqval} Let $\mathcal{O}_z$ be a COD with parameter
$\left[\binom{2m}{m-1}, 2m-1, \binom{2m-1}{m-1}\right]$. Then
$\mathcal{O}_z$ is equivalent to $\mathcal{G}_{2m-1}$ under
equivalence operation.
\end{theorem}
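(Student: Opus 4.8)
The plan is to determine $\mathcal{O}_z$ in layers — its rows and columns, then its conjugation pattern, then the variable occupying each nonzero position, and finally the signs — showing at each stage that the layer is forced up to the equivalence operations. Throughout I use that $\mathcal{O}_z$ has the maximal rate $\frac{m+1}{2m}$ (since $\binom{2m-1}{m-1}/\binom{2m}{m-1}=\frac{m+1}{2m}$), so Lemmas \ref{lem_zp_mode}, \ref{lem_zp_exist} and \ref{lem_conj_separated} all apply.

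First I would fix the combinatorial skeleton. By Lemma \ref{lem_zp_exist} every zero pattern in $\mathbb{F}_2^{2m-1}$ of weight $m$ or $m+1$ occurs among the rows; since there are exactly $\binom{2m-1}{m}+\binom{2m-1}{m+1}=\binom{2m}{m-1}=p$ of them, each occurs exactly once and no other zero pattern occurs. So, after a row permutation, the rows of $\mathcal{O}_z$ are indexed by the weight-$m$ and weight-$(m+1)$ vectors of $\mathbb{F}_2^{2m-1}$ and the columns by $1,\dots,2m-1$. By Lemma \ref{lem_conj_separated} I may assume $\mathcal{O}_z$ is conjugation separated with the weight-$m$ rows all conjugated and the weight-$(m+1)$ rows all non-conjugated; recording the conjugation status in a $(2m)$-th coordinate identifies each row with a weight-$(m+1)$ vector $\alpha\in\mathbb{F}_2^{2m}$, which is precisely the index set of $\mathcal{G}_{2m-1}$.

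Next I would match the variables. From $\mathcal{O}_z^H\mathcal{O}_z=(\sum|z_i|^2)I$ every variable occurs once in each column, hence in one nonzero cell per column; and since $\mathcal{O}_z$ is conjugation separated, no row contains both $z_a$ and $z_a^{*}$. Lemma \ref{lem_zp_mode} then says that if cells $(\alpha,i)$ and $(\beta,j)$ with $i\neq j$ carry the same variable in the same conjugation state then $\alpha\oplus\beta=e_i\oplus e_j$, whereas if they carry conjugate states then $\alpha\oplus\beta=e\oplus e_i\oplus e_j$. Consequently one cell $(\alpha,i)$ of a variable determines its cell in every other column $j$: it must be $(\alpha\oplus e_i\oplus e_j,\,j)$ when $\alpha(j)=0$ and $(\alpha\oplus e\oplus e_i\oplus e_j,\,j)$ when $\alpha(j)=1$, since exactly one of these two candidates has weight $m+1$ with a $1$ in column $j$. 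Thus the partition of the nonzero cells into variable classes is completely rigid, and a routine check shows it is the partition underlying $\mathcal{G}_{2m-1}$ (with $(\gamma\oplus e_i,\,i)$, $\gamma(i)=0$, serving as a representative cell of $z_\gamma$). An instance renaming then makes the variable names agree, and since each instance's conjugation state is dictated by its row, the conjugations agree as well. At this point $\mathcal{O}_z$ and $\mathcal{G}_{2m-1}$ have the same zero patterns, the same variable placement and the same conjugations, so $\mathcal{O}_z(\alpha,i)=s(\alpha,i)\,\mathcal{G}_{2m-1}(\alpha,i)$ for some signs $s(\alpha,i)\in\{\pm1\}$.

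Finally I would absorb the signs. Because the two designs share zero patterns and variable placement, the same pairs of rows $(\alpha,\beta)$ with $\beta=\alpha\oplus e\oplus e_i\oplus e_j$ form Alamouti $2\times2$'s in columns $i$ and $j$; comparing the vanishing of the column-$i$/column-$j$ inner product in $\mathcal{O}_z$ and in $\mathcal{G}_{2m-1}$ pair by pair over these blocks gives $s(\alpha,i)s(\alpha,j)s(\beta,i)s(\beta,j)=1$ around every such quadrilateral. It then remains to realize $s$ by negations, that is, to write $s(\alpha,i)=r_\alpha c_i v_{\gamma(\alpha,i)}$ with row negations $r_\alpha$, column negations $c_i$ and instance negations $v_\gamma$, where $\gamma(\alpha,i)$ is the index of the variable in cell $(\alpha,i)$. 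My approach would be a gauge-fixing argument: use the $m+1$ column negations on the columns supporting a fixed base row $\alpha_0$ to normalize $s$ to $1$ on that row, then propagate the normalization outward along the Alamouti quadrilaterals — in each one, three normalized cells force the fourth — using that the ``Alamouti graph'' on the rows is connected and that instance negations reach the cells the propagation misses. I expect this last step to be the main obstacle: one must verify that the various chains of quadrilaterals reaching a given row impose consistent requirements and that, once the row and column negations are spent, the residual discrepancy is constant on each variable class so that instance negations finish it — equivalently, a dimension count showing that the $\mathbb{F}_2$-space of sign patterns that are trivial around all Alamouti quadrilaterals is exactly the span of the row, column and variable indicators (up to their evident dependencies). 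The rigidity of Lemma \ref{lem_zp_mode}, that one cell pins down an entire variable class, is what makes this bookkeeping close; carrying it out shows $\mathcal{O}_z$ is equivalent to $\mathcal{G}_{2m-1}$.
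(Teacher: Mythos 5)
Your reduction to the sign problem is sound and follows the same road map as the paper: Lemma \ref{lem_zp_exist} pins down the multiset of zero patterns, Lemma \ref{lem_conj_separated} fixes the conjugation pattern, and Lemma \ref{lem_zp_mode} forces the variable occupying every nonzero cell, so that after renaming, $\mathcal{O}_z(\alpha,i)=s(\alpha,i)\,\mathcal{G}_{2m-1}(\alpha,i)$ with $s(\alpha,i)s(\alpha,j)s(\beta,i)s(\beta,j)=1$ around every Alamouti quadrilateral. But the last step --- showing that every such sign pattern $s$ lies in the span of row, column and instance negations --- is exactly the substance of the theorem, and you do not prove it: you describe a gauge-fixing propagation and then state that ``one must verify that the various chains of quadrilaterals reaching a given row impose consistent requirements'' and that a dimension count ``showing that the $\mathbb{F}_2$-space of sign patterns that are trivial around all Alamouti quadrilaterals is exactly the span of the row, column and variable indicators'' would finish it. That consistency/dimension claim is precisely the nontrivial content; asserting that the bookkeeping ``closes'' is not a proof, and nothing in your write-up rules out a cocycle that is trivial around every quadrilateral yet not realizable by the allowed negations.

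The paper closes this gap with an explicit ordered algorithm rather than a global cohomological count. It linearly orders the variable indices $\gamma$ and normalizes the instances of $z[\gamma]$ one variable at a time: for each $\gamma$, the instances $\mathcal{O}_z(\alpha_i,i)$ with $s<i<t$ each share an Alamouti $2\times2$ with an instance of some $z[\gamma\oplus e_i\oplus e_j]$ of \emph{strictly smaller} index, whose signs are already frozen, so their mutual sign relations are forced (transitively through the two anchors at columns $s+1$ and $t-1$); the remaining instances are the smallest-index entry of their row and are normalized by row negation without disturbing anything already fixed; one instance negation then removes the single residual degree of freedom for that variable. Because every constraint invoked involves only previously determined variables, consistency is automatic and no dimension count is needed (and, as the paper notes, column negation is never used --- so your proposed spending of ``$m+1$ column negations'' is not even necessary). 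To complete your proof you would have to supply an argument of comparable strength; as written, the proposal stops just short of the theorem's main difficulty.
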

\begin{proof} The basic idea is to show $\mathcal{O}_z$ and
$\mathcal{G}_{2m-1}$ can be transformed into a standard form COD.
Since equivalence operations are invertible, we claim
$\mathcal{O}_z$ is equivalent to $\mathcal{G}_{2m-1}$.

Before defining standard form, we first introduce a total order on
vectors of $\mathbb{F}_2^{2m}$, that $\alpha < \beta \Leftrightarrow
\sum_{i=1}^{2m}{\alpha(i) 2^i} < \sum_{i=1}^{2m}{\beta(i) 2^i}$.
Now, we will show $\mathcal{O}_z$, as well as $\mathcal{G}_{2m-1}$,
can be transformed into a ``standard from'' uniquely. Consider
variable $z[\gamma]$ by increasing order of $\gamma$, where $\gamma
\in \mathbb{F}_2^{2m}$, $\wt(\gamma)=2m$ and $\gamma(2m)= 0$. Keep
in mind that our algorithm determines the signs of all instance of
$z[\gamma]$ once at a time, and once the signs are determined, it
will never change in subsequent steps. Take out all rows containing
$z[\gamma]$, which is
\begin{eqnarray}
\mathcal{B}_\gamma & = & \begin{pmatrix} \pm z_\gamma I_{n_1} & \mathcal{M}_\gamma \\
\mathcal{M}'_\gamma & \pm z^*_\gamma I_{n_2}
\end{pmatrix} \nonumber\\
& = &
       \left(
       \begin{array}{c|c}
       \begin{array}{cccc}
       \pm z_\gamma&0&\cdots&0\\
       0& \pm z_\gamma&\cdots&0\\
       \vdots&\vdots&\ddots&\vdots\\
       0&0&\cdots& \pm z_\gamma
       \end{array}&\mathcal{M}_\gamma\\
       \hline
       \mathcal{M}'_\gamma &\begin{array}{cccc}
       \pm z_\gamma^*&0&\cdots&0\\
       0& \pm z_\gamma^*&\cdots&0\\
       \vdots&\vdots&\ddots&\vdots\\
       0&0&\cdots& \pm z_\gamma^*
       \end{array}
       \end{array}
       \right), \label{equ:Bj_form}
\end{eqnarray}
where $n_1, n_2 \in \{m, m-1\}$ and $n_1 + n_2 = 2m-1$.

Followings are two steps of our algorithm
\begin{itemize}
\item For those $z[\gamma]$
whose index are not smallest on the corresponding row, we will show
there are only two possible ways to determine their signs. In other
words, their relationships, same or opposite, are fixed due to the
determined signs of $z[\delta]$, where $\delta < \gamma$. At last,
we make use of instance negation to make sure, on the smallest row,
$z[\gamma]$ is positive.
\item If, in one row, $z[\gamma]$ is the element with the smallest index,
which implies all other elements in the same row are undetermined,
we can use row negation to make sure it's positive without affecting
the determined signs.
\end{itemize}

Now, we prove the claim that ``there are only two possible ways to
determine their signs'' in the first step above is true. Let
$\gamma(1)=\gamma(2)=\ldots=\gamma(s)=1, \gamma(s+1)=0,
\gamma(t-1)=1, \gamma(t)=\gamma(t+1)=\ldots=\gamma(2m)=0,$ where $0
\le s < t \le 2m$. And assume $\mathcal{O}_z(\alpha_i, i) =
z[\gamma]$ for $1 \le i \le 2m-1$, which implies $\alpha_i = \gamma
\oplus e_i$ when $\gamma(i)=0$, $\alpha_i = \gamma \oplus e_i \oplus
e$ when $\gamma(i)=1$.

For any $1 \le i\not=j \le 2m-1$, consider the element in
$\mathcal{O}_z(\alpha_i, j)$. When $\alpha_i(j)=0 \Leftrightarrow
\gamma(i)=\gamma(j)$, $\mathcal{O}_z(\alpha_i, j)=0$ by definition.
When $\alpha_i(j)=1 \Leftrightarrow \gamma(i)\oplus\gamma(j)=1$,
$\mathcal{O}_z(\alpha_i, j)=z[\delta]$, where $\delta=\alpha_i
\oplus e_j \oplus \alpha_i(2m)e=\gamma \oplus e_i \oplus \gamma(i)e
\oplus e_j \oplus \gamma(i)e  = \gamma \oplus e_i \oplus e_j$.
Therefore, for $1\le i\le s$ or $t \le i \le 2m-1$,
$\mathcal{O}_z(\alpha_i, i)$ is the element with the smallest index
on that row.

For $s < i < t$ and $1 \le j \le 2m-1$ satisfying $\gamma(i) \oplus
\gamma(j)=1$, submatrix $$\begin{pmatrix} \mathcal{O}_z(\alpha_i, i)
& \mathcal{O}_z(\alpha_i, j) \\ \mathcal{O}_z(\alpha_j, i) &
\mathcal{O}_z(\alpha_j, j)
\end{pmatrix} = \begin{pmatrix} z[\gamma] & z[\gamma \oplus e_i \oplus e_j]\\
z[\gamma \oplus e_i \oplus e_j] & z[\gamma]\end{pmatrix}$$ is an
Alamouti $2 \times 2$. As our algorithm determines the signs of
$z[\gamma]$ by increasing order, the signs of $z[\gamma \oplus e_i
\oplus e_j]$s are determined if and only if $\gamma\oplus e_i \oplus
e_j < \gamma \Leftrightarrow \gamma(i)=1, \gamma(j)=0, i > j$ or
$\gamma(i)=0, \gamma(j)=1, i < j$. Therefore, if $\gamma(i)=0$, the
relationship of signs of $\mathcal{O}_z(\alpha_i, i)$ and
$\mathcal{O}_z(\alpha_{t-1}, t-1)$ are determined; if $\gamma(i)=1$,
the relationship of signs of $\mathcal{O}_z(\alpha_i, i)$ and
$\mathcal{O}_z(\alpha_{s+1}, s+1)$ are determined. Since the
relationship of signs of $\mathcal{O}_z(\alpha_{s+1}, s+1)$ and
$\mathcal{O}_z(\alpha_{t-1}, t-1)$ are determined, and by the
transitivity of sign relationship, we claim all
$\mathcal{O}_z(\alpha_i, i)$ for $s < i < t$ are uniquely
determined.
\end{proof}

It's worth noting that in the proof of Theorem \ref{thm:2m-1_eqval},
``column negation'' operation is not used. Therefore, any COD with
parameter $\left[\binom{2m}{m-1}, 2m-1, \binom{2m-1}{m-1}\right]$
can be transformed into $\mathcal{G}_{2m-1}$ without using ``column
negation'' operation.

\begin{lemma}
\label{lem:pad_imposs}
 When $m$ is odd, it's impossible to obtain a COD with parameter
$\left[\binom{2m}{m-1}, 2m, \binom{2m-1}{m-1}\right]$ by adding an
extra column on $\mathcal{G}_{2m-1}$.
\end{lemma}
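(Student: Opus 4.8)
The plan is to assume, for contradiction, that adjoining a $(2m)$-th column to $\mathcal{G}_{2m-1}$ produces a COD $\mathcal{O}_z$ with parameter $\left[\binom{2m}{m-1},2m,\binom{2m-1}{m-1}\right]$, determine the shape the new column is forced to have, and show that the remaining freedom (a choice of signs) cannot be resolved consistently precisely when $m$ is odd. First I would pin down the support. By Lemma~\ref{lem_zp_exist} with $n=2m$, every row of $\mathcal{O}_z$ must have a zero pattern of weight $m+1$; since there are exactly $\binom{2m}{m+1}=\binom{2m}{m-1}$ such patterns and equally many rows, each occurs once, and (reading off the $(2m,2m)$-entry of $\mathcal{O}_z^H\mathcal{O}_z$) each variable appears exactly once in column $2m$. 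A non-conjugated row $\alpha$ of $\mathcal{G}_{2m-1}$ (one with $\alpha(2m)=0$) already carries $m+1$ nonzero entries, so its new entry is $0$; a conjugated row ($\alpha(2m)=1$) carries only $m$, so its new entry is nonzero, say $\mathcal{O}_z(\alpha,2m)=\epsilon_\alpha\,z[\mu(\alpha)]$. Inspecting $(\mathcal{O}_z^H\mathcal{O}_z)_{j,2m}$ for $j\le 2m-1$, the monomial contributed by row $\alpha$ is $(-1)^{\theta(\alpha,j)}\epsilon_\alpha\,z_{\alpha\oplus e_j\oplus e}\cdot z[\mu(\alpha)]$; a conjugated-type new entry would leave an uncancellable monomial of type $z_a\bar z_b$, so every new entry is of non-conjugated type, and cancellation of $z_{\alpha\oplus e_j\oplus e}z_{\mu(\alpha)}$ forces a partner row and, chasing indices, $\mu(\alpha)(j)=1$ whenever $\alpha(j)=1$. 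Hence $\mu(\alpha)=\alpha\oplus e_{2m}$, the partner of $\alpha$ along column $j$ is $\beta=\alpha\oplus e_j\oplus e\oplus e_{2m}$ (again a conjugated row with $\beta(j)=1$), and the new column is completely determined up to the signs $\epsilon_\alpha\in\{\pm1\}$.

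Next I would read off the sign constraints. Vanishing of the coefficient of $z_{\alpha\oplus e_j\oplus e}z_{\mu(\alpha)}$ gives $\epsilon_\alpha\epsilon_\beta=-(-1)^{\theta(\alpha,j)+\theta(\beta,j)}$. Here I must recompute $\theta(\alpha,j)+\theta(\beta,j)$ directly from \eqref{equ:def_theta}: the identity \eqref{equ:theta_ab_sum} was derived under $\alpha(2m)+\beta(2m)\equiv 1$, whereas now $\alpha,\beta$ are \emph{both} conjugated, so $\alpha(2m)+\beta(2m)\equiv 0$ and one instead gets $\theta(\alpha,j)+\theta(\beta,j)\equiv\wt_{j,2m}(\alpha\oplus\beta)+j\pmod 2$ for $j$ even but $\equiv\wt_{j,2m}(\alpha\oplus\beta)+j-1\pmod 2$ for $j$ odd. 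Since $\alpha\oplus\beta=e\oplus e_j\oplus e_{2m}$ has $\wt_{j,2m}(\alpha\oplus\beta)=2m-j-1$, this collapses to $\epsilon_\alpha\epsilon_\beta=(-1)^j$. Identifying a conjugated row with the $m$-subset $S=\{\,i\le 2m-1:\alpha(i)=1\,\}$ of $[2m-1]$, the partner along $j\in S$ is $S^c\cup\{j\}$ (complement inside $[2m-1]$), so the conditions become: $\epsilon_S\,\epsilon_{S^c\cup\{j\}}=(-1)^j$ for every $m$-subset $S$ and every $j\in S$.

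Finally I would derive the contradiction. Put $\delta_S=\epsilon_S(-1)^{\sum_{i\in S}i}$. Because $\sum_{i\in S}i+\sum_{i\in S^c\cup\{j\}}i=\sum_{i=1}^{2m-1}i+j=m(2m-1)+j$ and $2m-1$ is odd, the conditions become $\delta_S\,\delta_{S^c\cup\{j\}}=(-1)^m$. Composing two of these through an intermediate set (with $S'=S^c\cup\{j\}$ and $S''=(S')^c\cup\{j'\}$) shows, for $j\in S$ and $j'\notin S$, that $\delta_S=\delta_{(S\setminus\{j\})\cup\{j'\}}$; thus $\delta$ is invariant under single-element exchanges, hence constant on $\binom{[2m-1]}{m}$ since the Johnson graph $J(2m-1,m)$ is connected. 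But then $\delta_S\,\delta_{S^c\cup\{j\}}=\delta_S^2=1$, which is incompatible with $(-1)^m=-1$ when $m$ is odd. The step I expect to be the real obstacle is the middle one: correctly recomputing $\theta(\alpha,j)+\theta(\beta,j)$ for two conjugated rows — the case \eqref{equ:theta_ab_sum} does \emph{not} cover — and then verifying that the parity of $m$ genuinely survives the substitution $\epsilon\mapsto\delta$. Locating the support and the forced index $\mu(\alpha)$, and the closing Johnson-graph argument, are parity-insensitive and comparatively routine.
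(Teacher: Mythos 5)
Your proposal is correct and follows the same overall strategy as the paper: force the new column to be supported exactly on the conjugated rows with entry $\pm z_{\alpha\oplus e_{2m}}$, extract sign constraints from the Alamouti $2\times 2$ formed by columns $j$ and $2m$, and observe that the constraint system is inconsistent precisely when $m$ is odd. You correctly identified the one computational subtlety --- that \eqref{equ:theta_ab_sum} does not apply because both rows of the relevant Alamouti pair are conjugated, so $\alpha(2m)+\beta(2m)\equiv 0$ rather than $1$ and the odd-$j$ case picks up an extra term --- and your recomputation agrees with the paper's (which gets $\theta(\alpha,j)+\theta(\beta,j)\equiv 1$ for $j$ even and $\equiv 0$ for $j$ odd, i.e.\ your $\epsilon_\alpha\epsilon_\beta=(-1)^j$). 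Where you diverge is the endgame: the paper telescopes the relations through the explicit chain $i=2,4,\ldots,2m-2,1,3,\ldots,2m-1$, using that $\bigoplus_{i=1}^{2m-1}(e_i\oplus e_{2m}\oplus e)=0$ to return to the starting row with a net parity shift of $m$; you instead gauge away the $(-1)^j$ via $\delta_S=\epsilon_S(-1)^{\sum_{i\in S}i}$, reduce to $\delta_S\delta_{S^c\cup\{j\}}=(-1)^m$, and conclude from connectivity of the Johnson graph $J(2m-1,m)$. Your version has the advantage that every relation you compose is applied to a pair of sets for which it is actually valid, whereas the paper's chain requires each intermediate row to satisfy $\alpha(i)=1$ for the next index used, a point the paper does not verify; the paper's version is shorter and needs no auxiliary normalization. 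Both are sound derivations of the same contradiction.
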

\begin{proof} Assume that there exists such a $\left[\binom{2m}{m-1}, 2m, \binom{2m-1}{m-1}\right]$ COD by adding an extra
column on $\mathcal{G}_{2m-1}$. Denote the last column by
$\mathcal{L}_{2m}$, and assume $\mathcal{O}_z = \left(
\mathcal{G}_{2m-1}, \mathcal{L}_{2m} \right)$ is a
$\left[\binom{2m}{m-1}, 2m, \binom{2m-1}{m-1}\right]$ COD.

By Lemma \ref{lem_zp_mode}, we know $\mathcal{L}_{2m}$, up to
negations, are uniquely determined. It's not difficult to verify
that $\mathcal{L}_{2m}(\alpha) = \alpha(2m) (-1)^{\phi(\alpha)}
z_{\alpha \oplus e_{2m}}$ for $\alpha \in \mathbb{F}_2^{2m}$ and
$\wt(\alpha)=m+1$, where $\phi(\alpha) \in \mathbb{F}_2$ are
undetermined.

For any $\alpha \in \mathbb{F}_2^{2m}$ with $\wt(\alpha)=m+1$ and
$\alpha(2m)=1$, $\mathcal{O}_z(\alpha, i)$ and
$\mathcal{O}_z(\alpha, 2m)$ are contained in the following Alamouti
$2 \times 2$
$$
\begin{pmatrix}
\mathcal{O}_z(\alpha, i) & \mathcal{O}_z(\alpha, 2m) \\
\mathcal{O}_z(\beta, i) & \mathcal{O}_z(\beta, 2m)
\end{pmatrix} =
\begin{pmatrix}
(-1)^{\theta(\alpha, i)} z^*_{\alpha \oplus e_i \oplus e} &
(-1)^{\phi(\alpha)} z_{\alpha \oplus e_{2m}} \\
(-1)^{\theta(\beta, i)} z^*_{\beta \oplus e_i \oplus e} &
(-1)^{\phi(\beta)} z_{\beta \oplus e_{2m}}
\end{pmatrix},
$$
where $\beta = \alpha \oplus e_i \oplus e_{2m} \oplus e$. For
$\alpha \oplus \beta = e_i \oplus e_{2m} \oplus e$, calculate
$\theta(\alpha, i) + \theta(\beta, i)$ by definition
\eqref{equ:def_theta}. \textbf{When $i$ is even,} $\theta(\alpha, i)
+ \theta(\beta, i) \equiv \wt_{i,2m}(e_i \oplus e_{2m} \oplus e) + i
\equiv  2m-i-1 + i\equiv 1 \pmod 2$, which implies $\phi(\alpha)=
\phi(\alpha \oplus e_i \oplus e_{2m} \oplus e)$ for even $i$.
\textbf{When $i$ is odd,} $\theta(\alpha, i) + \theta(\beta, i)
\equiv \wt_{i,2m}(e_i \oplus e_{2m} \oplus e) + (i-1) + 2 \equiv
(2m-i-1) + (i-1) + 2\equiv 0 \pmod 2$, which implies $\phi(\alpha)=
\phi(\alpha \oplus e_i \oplus e_{2m} \oplus e) \oplus 1$ for odd
$i$.

Now, we are ready to induce the contradiction. For any $\alpha \in
\mathbb{F}_2^n$, let $i = 2l$, $l = 1, 2, \ldots, m-1$ and $i =
2l-1$, $l = 1, 2, \ldots, m$ separately. We have
\begin{eqnarray*}
\phi(\alpha)  &= & \phi\left(\alpha \bigoplus_{l=1}^{m-1}({e_{2l}
\oplus e_{2m} \oplus e}) \bigoplus_{l=1}^{m}(e_{2l-1} \oplus e_{2m}
\oplus
e)\right) \oplus m\\
& = & \phi\left(\alpha \oplus e_{2m} \oplus e
\bigoplus_{l=1}^{2m-1}{e_{l}}\right) \oplus 1\\
& = & \phi(\alpha \oplus e_{2m} \oplus e \oplus e \oplus e_{2m}) \oplus 1\\
& = & \phi(\alpha) \oplus 1,
\end{eqnarray*}
which is a contradiction!
\end{proof}

Equipped with the above results, we are able to prove the
unexistence of COD with parameter $\left[\binom{2m}{m-1}, 2m,
\binom{2m-1}{m-1}\right]$, $m$ odd.

\begin{theorem} There does not exist COD with
parameter $\left[\binom{2m}{m-1}, 2m, \binom{2m-1}{m-1}\right]$ when
$m$ is odd.
\end{theorem}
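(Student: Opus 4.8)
The plan is to deduce the statement immediately from Theorem~\ref{thm:2m-1_eqval} and Lemma~\ref{lem:pad_imposs}, the two results already established above. Suppose for contradiction that for some odd $m$ there is a COD $\mathcal{O}_z$ with parameter $\left[\binom{2m}{m-1}, 2m, \binom{2m-1}{m-1}\right]$. First I would record two easy facts. Since $\binom{2m-1}{m-1}/\binom{2m}{m-1} = \frac{m+1}{2m}$, the design $\mathcal{O}_z$ attains the maximal rate of Theorem~\ref{thm:max_rate}. Moreover, comparing diagonal entries in $\mathcal{O}_z^H\mathcal{O}_z = \left(\sum_i |z_i|^2\right)I_{2m}$ and using that $|z_1|^2,\dots,|z_k|^2$ are algebraically independent, every column of $\mathcal{O}_z$ contains each of $z_1,\dots,z_{\binom{2m-1}{m-1}}$ exactly once, up to sign and conjugation.

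Next I would delete one column, say the last, and call the resulting $\binom{2m}{m-1}\times(2m-1)$ matrix $\mathcal{O}'_z$. Deleting a column of $\mathcal{O}_z$ deletes the corresponding row and column of $\mathcal{O}_z^H\mathcal{O}_z$ and leaves the other entries unchanged, so $\mathcal{O}_z'^H\mathcal{O}'_z = \left(\sum_i |z_i|^2\right)I_{2m-1}$; and by the remark above every one of the $\binom{2m-1}{m-1}$ variables still appears in $\mathcal{O}'_z$. Hence $\mathcal{O}'_z$ is a COD with parameter $\left[\binom{2m}{m-1}, 2m-1, \binom{2m-1}{m-1}\right]$, so by Theorem~\ref{thm:2m-1_eqval} it is equivalent to $\mathcal{G}_{2m-1}$. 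By the remark following the proof of that theorem, this equivalence can be realized using only row permutations and negations, column permutations among the $2m-1$ columns, and instance conjugation/negation/renaming --- no column negation is required.

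Now I would lift this chain of equivalence operations to the full matrix $\mathcal{O}_z$: apply the very same operations to $\mathcal{O}_z$, letting the row and instance operations act on the extra last column as well and letting the column permutations leave it fixed. Equivalence operations send CODs to CODs, so the result is a COD $\widetilde{\mathcal{O}}_z = \left(\mathcal{G}_{2m-1}, \mathcal{L}_{2m}\right)$ with parameter $\left[\binom{2m}{m-1}, 2m, \binom{2m-1}{m-1}\right]$ whose first $2m-1$ columns are literally $\mathcal{G}_{2m-1}$. In other words, $\widetilde{\mathcal{O}}_z$ is obtained from $\mathcal{G}_{2m-1}$ by adjoining one extra column, which, since $m$ is odd, contradicts Lemma~\ref{lem:pad_imposs}. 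Therefore no such COD exists.

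The only things to check are routine bookkeeping: that column deletion preserves orthogonality and the variable count, and that the equivalence carrying $\mathcal{O}'_z$ to $\mathcal{G}_{2m-1}$ lifts to $\mathcal{O}_z$ with the adjoined column kept apart. I do not anticipate a genuine obstacle; all of the substance has already been packaged into Theorem~\ref{thm:2m-1_eqval} and Lemma~\ref{lem:pad_imposs}, and the ``no column negation'' remark is exactly what lets the lifted design $\widetilde{\mathcal{O}}_z$ match the hypothesis of Lemma~\ref{lem:pad_imposs} verbatim.
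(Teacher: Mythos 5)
Your proposal is correct and follows the same route as the paper: delete a column, invoke Theorem~\ref{thm:2m-1_eqval} to identify the truncation with $\mathcal{G}_{2m-1}$, transport the equivalence back to the full matrix to obtain $\left(\mathcal{G}_{2m-1}, \mathcal{L}_{2m}\right)$, and contradict Lemma~\ref{lem:pad_imposs}. Your added bookkeeping (each variable appears once per column, the lift of the equivalence operations, the role of the no-column-negation remark) only makes explicit what the paper leaves implicit.
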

\begin{proof} We prove it by contradiction. Assume there exists a COD $\mathcal{O}_z$ with parameter $\left[\binom{2m}{m-1}, 2m,
\binom{2m-1}{m-1}\right]$. Deleting one column, we obtain a COD with
parameter $\left[\binom{2m}{m-1}, 2m-1, \binom{2m-1}{m-1}\right]$,
which is denoted by $\mathcal{O}'_z$. By Theorem
\ref{thm:2m-1_eqval}, we know $\mathcal{O}'_z$ can be obtained by
equivalence operation over $\mathcal{G}_{2m-1}$. Since equivalence
operation is invertible, apply the inverse operation on
$\mathcal{O}_z$, we obtain a COD $\left( \mathcal{G}_{2m-1},
\mathcal{L}_{2m} \right)$. By Lemma \ref{lem:pad_imposs}, we know
it's impossible to add an extra column on $\mathcal{G}_{2m-1}$ still
to be orthogonal.
\end{proof}

The following corollary is a direct consequence of the previous
results.

\begin{corollary} When $n \equiv 0, 1, 3 \pmod 4$, CODs with parameter $[p, n, k]$ achieving maximal
rate and minimal delay are the same under equivalence operation.
\end{corollary}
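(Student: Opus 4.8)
The plan is to split the statement according to the parity of $n$. Recalling that ``maximal rate and minimal delay'' pins the parameter down to $\left[\binom{2m}{m-1}, n, \binom{2m-1}{m-1}\right]$ via Theorem \ref{thm:max_rate} together with the lower bound theorem for decoding delay, there are exactly two cases to treat: $n$ odd, i.e. $n = 2m-1$ (covering $n \equiv 1, 3 \pmod 4$), and $n \equiv 0 \pmod 4$, i.e. $n = 2m$ with $m$ even. The odd case is immediate: Theorem \ref{thm:2m-1_eqval} asserts that \emph{every} COD with parameter $\left[\binom{2m}{m-1}, 2m-1, \binom{2m-1}{m-1}\right]$ is equivalent to $\mathcal{G}_{2m-1}$, so any two such CODs are equivalent to one another and nothing further is needed. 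The content of the corollary is therefore the case $n = 2m$, $m$ even.

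So let $\mathcal{O}_z$ and $\mathcal{O}'_z$ be two CODs with parameter $\left[\binom{2m}{m-1}, 2m, \binom{2m-1}{m-1}\right]$, $m$ even (such CODs exist by the constructions of \cite{Lia03,SXL04,LFX05}). As in the proof of the nonexistence theorem, deleting a column of either one yields a COD with parameter $\left[\binom{2m}{m-1}, 2m-1, \binom{2m-1}{m-1}\right]$: orthogonality of the surviving columns is automatic, and every variable still occurs since in any COD each variable appears exactly once in each column, so deleting one column removes only one of its $2m \geq 2$ instances. By Theorem \ref{thm:2m-1_eqval} this truncation is equivalent to $\mathcal{G}_{2m-1}$, and since the operations used there avoid column negation, I would invert them to write $\mathcal{O}_z$, up to equivalence, as $\left(\mathcal{G}_{2m-1}, \mathcal{L}\right)$ for some extra column $\mathcal{L}$, and likewise $\mathcal{O}'_z \sim \left(\mathcal{G}_{2m-1}, \mathcal{L}'\right)$. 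By Lemma \ref{lem_zp_mode}, the zero pattern and the variable content of the appended column are forced, so $\mathcal{L}(\alpha) = \alpha(2m)(-1)^{\phi(\alpha)} z_{\alpha \oplus e_{2m}}$ and $\mathcal{L}'(\alpha) = \alpha(2m)(-1)^{\phi'(\alpha)} z_{\alpha \oplus e_{2m}}$, with only the sign functions $\phi, \phi'$ on $\{\alpha : \wt(\alpha) = m+1,\ \alpha(2m) = 1\}$ left to pin down.

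The heart of the argument is that such a sign function is unique up to a global flip. Running the Alamouti computation from the proof of Lemma \ref{lem:pad_imposs}, both $\phi$ and $\phi'$ must satisfy $\phi(\alpha) \oplus \phi(\alpha \oplus e_i \oplus e_{2m} \oplus e)$ equal to $0$ when $i$ is even and to $1$ when $i$ is odd, for every $1 \leq i \leq 2m-1$ with $\alpha(i) = 1$. Hence $\psi := \phi \oplus \phi'$ is invariant under every move $T_i : \alpha \mapsto \alpha \oplus e_i \oplus e_{2m} \oplus e$. A short combinatorial check then shows that the $T_i$ act transitively on $\{\alpha : \wt(\alpha) = m+1,\ \alpha(2m) = 1\}$: composing $T_i$ with $T_j$ (where $j$ is a coordinate that $T_i$ has turned into a $1$) realizes the ordinary transposition that switches coordinate $i$ off and coordinate $j$ on, and these transpositions act transitively on weight-$m$ strings in the first $2m-1$ coordinates. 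Therefore $\psi$ is constant; if $\psi \equiv 0$ then $\mathcal{L}' = \mathcal{L}$, and if $\psi \equiv 1$ then $\mathcal{L}' = -\mathcal{L}$, so a single column negation of the last column carries $\left(\mathcal{G}_{2m-1}, \mathcal{L}\right)$ to $\left(\mathcal{G}_{2m-1}, \mathcal{L}'\right)$. Either way $\mathcal{O}_z$ and $\mathcal{O}'_z$ are equivalent, completing the proof. I expect the last step to be the main obstacle: one must know that the orthogonality constraints on $\phi$ are both consistent — which holds precisely because $m$ is even, the very parity condition under which the computation in Lemma \ref{lem:pad_imposs} does \emph{not} collapse to a contradiction — and rigid enough to allow only a global sign, and the transitivity of the moves $T_i$ is what delivers the latter.
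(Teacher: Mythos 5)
Your proposal is correct and follows essentially the same route as the paper: split off the odd case via Theorem \ref{thm:2m-1_eqval}, and for $n \equiv 0 \pmod 4$ truncate to $\mathcal{G}_{2m-1}$, invoke Lemma \ref{lem_zp_mode} to fix the appended column up to a sign function $\phi$, and use the parity relations from the proof of Lemma \ref{lem:pad_imposs} together with the transposition/transitivity argument to conclude $\phi$ is determined up to a global flip, absorbed by a column negation. Your explicit remarks on consistency for even $m$ and on the $\psi = \phi \oplus \phi'$ formulation are welcome elaborations but do not change the argument.
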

\begin{proof} When $n \equiv 1, 3 \pmod 4$, i.e., $n=2m-1$ for integer $m$, COD $\mathcal{O}_z$ with
parameter $\left[{2m \choose m-1}, 2m-1, {2m-1 \choose m-1}\right]$
achieves maximal rate and minimal delay. By Theorem
\ref{thm:2m-1_eqval}, we know $\mathcal{O}_z$ is the same as
$\mathcal{G}_{2m-1}$ under equivalence operation.

When $n \equiv 0 \pmod 4$, i.e., $n = 2m$, $m$ even, COD
$\mathcal{O}_z$ with parameter $\left[{2m \choose m-1}, 2m, {2m-1
\choose m-1}\right]$ achieves maximal rate and minimal delay. Since
by deleting one column of $\mathcal{O}_z$ we obtain a maximal-rate,
minimal-delay COD for $n=2m-1$, which is equivalent to
$\mathcal{G}_{2m-1}$ by Lemma \ref{thm:2m-1_eqval}. By Lemma
\ref{lem_zp_mode}, we know the remaining column is uniquely
determined regardless of signs.

Following the argument in Lemma \ref{lem:pad_imposs}, it's
sufficient to prove function $\phi(\alpha)$, $\alpha \in
\mathbb{F}_2^{2m}, \wt(\alpha)=m+1$, is uniquely determined up to a
negation of all. From the proof of Lemma \ref{lem:pad_imposs}, we
know $\phi(\alpha)=\phi(\alpha\oplus e_i \oplus e_{2m} \oplus e)$
for even $i$; and $\phi(\alpha)=\phi(\alpha\oplus e_i \oplus e_{2m}
\oplus e) \oplus 1$ for odd $i$, where $\alpha(i)=1$. Again, take
integer $j$ such that $(\alpha\oplus e_i \oplus e_{2m} \oplus
e)(j)=1 \Leftrightarrow \alpha(j)=0$, we can obtain the relationship
between $\phi(\alpha)$ and $\phi(\alpha\oplus e_i \oplus e_{2m}
\oplus e \oplus e_j \oplus e_{2m} \oplus e)=\phi(\alpha\oplus e_i
\oplus e_j)$. Since $i, j$ are taken arbitrarily, we know all
relationships between $\phi(\alpha)$ are determined.

\end{proof}

\section{Acknowledgment}
We are immensely grateful to Chen Yuan's help for deciding the sign
function in \eqref{equ:def_theta}.

\end{document}